 \newcommand{\ds}{\displaystyle}
\newtheorem{lea}{Lemma}[section]
\newtheorem{rem}{Remark}[section]
\numberwithin{equation}{section} \numberwithin{prn}{section}
\numberwithin{cor}{section} \numberwithin{thm}{section}
\numberwithin{lea}{section}
\begin{document}
\title{A short note on model selection by LASSO methods in a change-point
model}
\date{}
\author{Fuqi Chen{\thanks {University of Windsor, 401 Sunset Avenue, Windsor,
Ontario, N9B 3P4. Email: chen111n@uwindsor.ca}} \and and \quad{ }
S\'ev\'erien Nkurunziza{\thanks {University of Windsor, 401 Sunset
Avenue, Windsor, Ontario, N9B 3P4. Email: severien@uwindsor.ca}}}
\thispagestyle{empty} \selectlanguage{english} \maketitle

\maketitle\thispagestyle{empty}
\begin{abstract} In Ciuperca~(2012)~(Ciuperca.
Model selection by LASSO methods in a change-point model, {\em Stat.
Papers}, 2012;(in press)), the author considered a linear regression
model with multiple change-points occurring at unknown times. In
particular, the author studied the asymptotic properties of the
LASSO-type and of the adaptive LASSO estimators. While the
established results seem interesting, we point out some major errors
in proof of the most important result of the quoted paper.
Further, we present a corrected result and proof.
\end{abstract}
\noindent {\it Keywords:} Asymptotic properties; Change-points;
Model selection; LASSO; Regression.

 \pagenumbering{arabic} \addtocounter{page}{0}
 \section{Introduction}
In Ciuperca~(2012), the author considered a linear regression model
with multiple change-points occurring at unknown times. In
particular, the author studied the asymptotic properties of the
LASSO-type and that of the adaptive LASSO estimators. While the
established results seem interesting, we point out a major error in
proof of one of the important result. In particular, the proof of
Part~(ii) of Lemma~3 in Ciuperca~(2011) is based on the inequality
$|a^{2}-b^{2}|\leqslant (a-b)^{2}$, which is wrong. Indeed, take
$a=2$ and $b=1$, we get $|a^{2}-b^{2}|=3>(2-1)^{2}=1$ which
contradicts the inequality used in the quoted paper.

For the sake of clarity, we use the same notation and we suppose
that the main assumptions in Ciuperca~(2012) hold. Below, we recall
these assumptions for the convenience of the reader. Namely, we
consider the following model: $Y_i=f_{\theta}(X_i)+\varepsilon_i$,
where
\begin{equation*}
f_{\theta}(X_i)=X'_{i}\phi_{1}\mathbb{\mathbb{I}}_{\left\{i<l_1\right\}}
+X'_{i}\phi_{2}\mathbf{\mathbb{I}}_{\left\{l_1\leq
i<l_2\right\}}+...+X'_{i}\phi_{K+1}\mathbf{\mathbb{I}}_{\left\{i>l_K\right\}},\quad
i=1,...n,
\end{equation*}
$\mathbb{\mathbb{I}}_{A}$ denotes the indicator function of the
event $A$, $Y_i$ denotes the response variable, $X_i$ is a
$p$-vector of regressors, $(\varepsilon_{i})_{1\leqslant i\leqslant
n}$ are the errors which are assumed to be  independent and
identically distributed (i.i.d.) random variables, $\phi_{i}\in
\Gamma\subset \mathbb{R}^{p}$, $\Gamma$ is compact, $i=1,2,\dots,K$.
The model parameters are given by $\theta=(\theta_1,\theta_2)$, with
the regression parameters $\theta_1=(\phi_1,...\phi_{k+1})$ and the
change-points $\theta_2=(l_1,...,l_k)$. In addition, we set
$\theta_{1}^0=(\phi_1^0,...\phi_{k+1}^0)$ and
$\theta_2^0=(l_1^0,...,l_k^0)$ to be the true values of $\theta_1$
and $\theta_2$, respectively. As in Ciuperca~(2012), we impose the
following conditions.

\subsection*{Main Assumptions}
\begin{description}
  \item[$\bm{(H_{1})}$] There exists two positive constants $u, c_{0} (> 0)$
such that $l_{r+1}-l_r\geqslant c_{0}[n^{u}]$,  for every
$r\in(1,...,K)$, with $l_{0}=1$ and $l_{K+1}=n$. Without loss of
generality, we consider $3/4\leqslant u\leqslant 1$, and $c_0=1$.
  \item[$(\bm{H_{2}})$] $n^{-1}\, \ds{\max_{1\leqslant i\leqslant n}}(X'_{i}X_{i})
 \xrightarrow[{n\rightarrow
\infty}]{ } \bm{0}$ and for any $r=1,...,K+1$, the matrix\\
$C_{n,r}\equiv(l_r-l_{r-1})\ds{\sum_{i=l_{r-1}+1}^{l_r}}X_{i} X'_{i}
\xrightarrow[{n\rightarrow \infty}]{ }  C_r$, where $C_r$ is a
non-negative definite matrix.
  \item[$(\bm{H_{3}})$] $\varepsilon$ is a random variable absolutely
continuous with $\textrm{E}(\varepsilon_i)=0$,
$\textrm{E}(\varepsilon_i^2)=\sigma^2$, $i=1,2,\dots, n$. 
\end{description}
We assume that $\phi_r\neq \phi_{r+1}$, $r=1,...,k$, and consider
the following penalized sum:
\begin{equation*}
S(l_1,...,l_k)=\sum_{r=1}^{k+1}\Big[\inf_{\phi_r}\sum_{i=l_{r-1}+1}^{l_r}
\Big(((Y_i-X_i\phi_r)^{2})+\frac{\lambda_{n,(l_{r-1},l_r)}}
{l_r-l_{r-1}}\sum_{u=1} ^{P}|\phi_{r,u}|^\gamma \Big)\Big],
\end{equation*}
where $\lambda_{n,(l_{r-1},l_r)}=O(l_r-l_{r-1})^{1/2}$ is the tuning
parameter and $\gamma>0$. We define the LASSO-type estimator of
$(\theta_1^0, \theta_2^0)$, say $(\hat{\theta}_1^{s},
\hat{\theta}_2^{s})$, where 
$\hat{\theta}_1^{s}=(\hat{l}_1^s,...,\hat{l}_k^s)$ and
$\hat{\theta}_2^{s}=(\hat{\phi}_1^s,...,\hat{\phi}_{k+1}^s)$, by
\begin{equation*}
\hat{\phi}_r^s=\ds{\arg\min_{\phi_r}}\sum_{i=l_{r-1}+1}^{l_r}\Big((Y_i-X_i\phi_r)^{2}
)+\frac{\lambda_{n,(l_{r-1},l_r)}}{l_r-l_{r-1}}\sum_{u=1}
^{P}|\phi_{r,u}|^\gamma \Big), \quad \forall r=1,...,k+1,
\end{equation*}
and
\begin{equation*}
\hat{\theta}_1^{s}=\ds{\arg\min_{\theta_1}}S(l_1,...,l_k).
\end{equation*}
Note that, for $\gamma=1$ and $\gamma=2$, we obtain the LASSO
estimator and ridge estimator respectively.

The rest of this paper is organized as follows.
Section~\ref{sec:mainres} gives the main result of this paper, and
in Section~\ref{sec:conclusion}. The proof of the main result is
given in the Appendix.
\section{Main result}\label{sec:mainres}
\begin{lea}\label{lema3}
Under Assumptions $(\bm{H_{2}})$, $(\bm{H_{3}})$, for all $n_1$,
$n_2\in N$, such that $n_1\geqslant n^u$, with $3/4\leq u \leq 1$,
$n_2\leq n^v$, $v<1/4$, let be the model:
\begin{eqnarray*}
Y_i&=&X_i'\phi_1^0+\epsilon_i, \; i=1,...,n_1\\
Y_i&=&X_i'\phi_2^0+\epsilon_i, \; i=n_1+1,...,n_2,
\end{eqnarray*}
with $\phi_{1}^{0}\neq \phi_{2}^{0}$. We set
$A_{n_1+n_2}^s(\phi)=\ds{\sum_{i=1}^{n_1}}\eta_{i;(0,n_1)}^s(\phi,\phi_1^0)
+\ds{\sum_{i=n_1+1}^{n_1+n_2}}\eta_{i;(n_1,n_1+n_2)}^s(\phi,\phi_2^0)$
\quad{ }
and\\
$\hat{\phi}_{n_1+n_2}^s=\ds{\arg\min_{\phi}}A_{n_1+n_2}^s(\phi)$.
Let $\delta\in (0, u-3v)$. Then,
\begin{enumerate}
  \item[(i)] $||\hat{\phi}_{n_1+n_2}^s-\phi_1^0||\leq n^{-(u-v-\delta)/2}$.
  \item[(ii)] If $\phi_2^0=\phi_1^0+\phi_3^0\,n^{-1/4}$ for some
  $\phi_3^0$, then \quad{ }
$$\ds{\sum_{i=1}^{n_1}}\eta_{i;(0,n_1)}^s(\phi_{n_1+n_2}^s,\phi_1^0)=O_p(1).$$
\end{enumerate}
\end{lea}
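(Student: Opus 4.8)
The plan is to reduce the statement to a term-by-term estimate, after first upgrading the rate obtained in Part~(i). Set $\Delta=\hat{\phi}_{n_1+n_2}^s-\phi_1^0$, $W_1=\sum_{i=1}^{n_1}\varepsilon_iX_i$ and $V_1=\sum_{i=1}^{n_1}X_iX_i'$. Since $Y_i-X_i'\phi_1^0=\varepsilon_i$ for $i\le n_1$, the correct expansion of the quantity in (ii) — based on $a^2-b^2=(a-b)(a+b)$ rather than the false bound $|a^2-b^2|\le(a-b)^2$ — reads
\[
\sum_{i=1}^{n_1}\eta_{i;(0,n_1)}^s(\hat{\phi}_{n_1+n_2}^s,\phi_1^0)=-2W_1'\Delta+\Delta'V_1\Delta+\lambda_{n,(0,n_1)}\Big(\sum_{u=1}^{P}|\hat{\phi}_{u}|^{\gamma}-\sum_{u=1}^{P}|\phi^0_{1,u}|^{\gamma}\Big).
\]
The whole difficulty is that the cross term $W_1'\Delta$ cannot be discarded; it is precisely the term the quoted proof dropped via the erroneous inequality. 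I would therefore bound the three pieces separately, which forces me to know the exact order of $\|\Delta\|$.

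Second, and this is the crux, I would prove the sharp rate $\|\Delta\|=O_p(n_1^{-1/2})$, which is stronger than Part~(i) and uses the extra hypothesis $\phi_2^0=\phi_1^0+\phi_3^0\,n^{-1/4}$. Because $\hat{\phi}_{n_1+n_2}^s$ equivalently minimizes $\sum_{i=1}^{n_1+n_2}(Y_i-X_i'\phi)^2+(\lambda_{n,(0,n_1)}+\lambda_{n,(n_1,n_1+n_2)})\sum_u|\phi_u|^\gamma$, the inequality $A_{n_1+n_2}^s(\hat{\phi}_{n_1+n_2}^s)\le A_{n_1+n_2}^s(\phi_1^0)$ together with $Y_i-X_i'\phi_1^0=\varepsilon_i+X_i'(\phi_2^0-\phi_1^0)\mathbb{I}_{\{i>n_1\}}$ yields, with $V=\sum_{i=1}^{n_1+n_2}X_iX_i'$ and $U=\sum_{i=1}^{n_1+n_2}\varepsilon_iX_i+n^{-1/4}\sum_{i=n_1+1}^{n_1+n_2}X_iX_i'\phi_3^0$,
\[
\Delta'V\Delta\le 2\,|U'\Delta|+\big(\lambda_{n,(0,n_1)}+\lambda_{n,(n_1,n_1+n_2)}\big)\,L\,\|\Delta\|,
\]
where $L$ is a Lipschitz constant of $\phi\mapsto\sum_u|\phi_u|^\gamma$ on the compact set $\Gamma$. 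By $(H_2)$--$(H_3)$, $\mathrm{E}\|\sum_{i=1}^{n_1+n_2}\varepsilon_iX_i\|^2=\sigma^2\,\mathrm{tr}(V)=O(n_1)$, while the bias part $n^{-1/4}\sum_{i>n_1}X_iX_i'\phi_3^0=O(n^{v-1/4})=o(1)$ is negligible relative to $\sqrt{n_1}\ge n^{u/2}$; hence $\|U\|=O_p(n_1^{1/2})$. Since $\lambda_{n,(0,n_1)}+\lambda_{n,(n_1,n_1+n_2)}=O(n_1^{1/2})$ and, by $(H_2)$, $\lambda_{\min}(V)\ge c\,n_1$ for some $c>0$ (as $n_2\le n^v\ll n^u\le n_1$), the displayed inequality reduces to $c\,n_1\|\Delta\|^2\le O_p(n_1^{1/2})\|\Delta\|$, so that $\|\Delta\|=O_p(n_1^{-1/2})$.

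Third, I would substitute this rate into the three terms of the first display. By the Cauchy--Schwarz inequality and $\|W_1\|=O_p(n_1^{1/2})$ (again from $\mathrm{E}\|W_1\|^2=\sigma^2\,\mathrm{tr}(V_1)=O(n_1)$), one gets $2|W_1'\Delta|\le 2\|W_1\|\,\|\Delta\|=O_p(1)$; by $(H_2)$, $\Delta'V_1\Delta\le\lambda_{\max}(V_1)\|\Delta\|^2=O(n_1)\,O_p(n_1^{-1})=O_p(1)$; and since $\lambda_{n,(0,n_1)}=O(n_1^{1/2})$ and $|\sum_u|\hat{\phi}_u|^\gamma-\sum_u|\phi^0_{1,u}|^\gamma|\le L\|\Delta\|$, the penalty term is $O(n_1^{1/2})\,O_p(n_1^{-1/2})=O_p(1)$. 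Summing the three $O_p(1)$ contributions gives $\sum_{i=1}^{n_1}\eta_{i;(0,n_1)}^s(\hat{\phi}_{n_1+n_2}^s,\phi_1^0)=O_p(1)$, which is the assertion (note that each bound is two-sided, so both the upper and the lower control are obtained).

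The main obstacle is the second step: one must retain and correctly bound the cross term $W_1'\Delta$, and this is only possible once the rate is improved to $O_p(n_1^{-1/2})$. This improvement relies essentially on the near-change scaling $\phi_2^0-\phi_1^0=O(n^{-1/4})$ together with $n_2\le n^v$, $v<1/4$, which make the short second segment contribute only lower-order terms to the basic inequality; the cruder rate of Part~(i) is insufficient here, precisely because it leaves the cross and penalty terms of size $O_p(n^{(1-u+v+\delta)/2})$, which need not be bounded. For $\gamma\ge1$ the Lipschitz bound on $\phi\mapsto\sum_u|\phi_u|^\gamma$ over $\Gamma$ is immediate, so the argument is uniform in the admissible tuning exponent $\gamma$.
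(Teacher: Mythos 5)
Your proposal is correct (modulo two caveats below, both of which the paper's own proof also carries implicitly), but it takes a genuinely different route from the paper. The paper never upgrades the rate of $\hat{\phi}_{n_1+n_2}^{s}$ beyond Part~(i). Instead it splits $A_{n_1+n_2}^{s}$ into a first-segment process $Z_n^s(\phi)$ and a second-segment process $t_n^s(\phi)$, shows $t_n^s(\hat{\phi}_{n_1+n_2}^{s})=o_p(1)$ --- this is where Part~(i) and the hypothesis $\phi_2^0=\phi_1^0+\phi_3^0 n^{-1/4}$ are consumed --- and then sandwiches: by minimality and $Z_n^s(\phi_1^0)=t_n^s(\phi_1^0)=0$, one has $Z_n^s(\hat{\phi}_{n_1+n_2}^{s})\leqslant |t_n^s(\hat{\phi}_{n_1+n_2}^{s})|=o_p(1)$ from above, while from below $Z_n^s(\hat{\phi}_{n_1+n_2}^{s})\geqslant \inf_\phi Z_n^s(\phi)$, and this infimum is shown to be $O_p(1)$ by comparison with the single-segment estimator $\hat{\phi}_{n_1}$, whose $O_p(n_1^{-1/2})$ rate is classical. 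Thus the cross term $W_1'\Delta$ that you labor to control never appears in the paper; it is bypassed entirely. Your route instead sharpens the rate of the two-segment penalized estimator itself to $\|\Delta\|=O_p(n_1^{-1/2})$ via the basic inequality and the strong convexity $\lambda_{\min}(V)\geqslant c\,n_1$, then bounds the three terms of the exact expansion. What your approach buys: the sharp rate is of independent interest, and --- notably --- your argument uses the closeness condition only through the bias term $\|\sum_{i>n_1}X_iX_i'(\phi_2^0-\phi_1^0)\|=O(n^{v}\|\phi_2^0-\phi_1^0\|)$, which is already $o(n_1^{1/2})$ because $v<1/4\leqslant u/2$; so your proof, carried out carefully, would establish Part~(ii) for \emph{arbitrary} fixed $\phi_2^0\neq\phi_1^0$, i.e.\ the stronger statement of Ciuperca that the present authors explicitly say they can neither prove nor refute. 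What the paper's approach buys: it only requires rate theory for the ordinary one-segment estimator and avoids any fresh analysis of the penalized two-segment criterion.

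Two caveats. First, your Lipschitz bound on $\phi\mapsto\sum_u|\phi_u|^\gamma$ holds only for $\gamma\geqslant 1$ (or when every component of $\phi_1^0$ is nonzero, after localizing via Part~(i)); for $0<\gamma<1$ with some $\phi_{1,u}^0=0$ the map is merely H\"older, your rate degrades to $O_p(n_1^{-1/(2(2-\gamma))})$, and the cross and penalty terms are no longer $O_p(1)$ --- so your closing claim of uniformity over all admissible $\gamma>0$ overreaches. (The paper's step $\sum_{k}(|\hat{\phi}_{n_1,k}|^\gamma-|\phi_{1,k}^0|^\gamma)=O_p(\|\hat{\phi}_{n_1}-\phi_1^0\|)$ has the same hidden restriction.) Second, your bound $\lambda_{\min}(V)\geqslant c\,n_1$ requires $C_1$ to be positive definite, whereas $(H_2)$ as stated only asserts non-negative definiteness; the paper's proof needs the same strengthening for the rate of $\hat{\phi}_{n_1}$, so this is a shared, not a new, gap.
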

\begin{rem}
It should be noted that, although Part (i) of the above lemma is the
same as that of lemma~3 of Ciuperca~(2012), Part~(ii) is slightly
different. The established result holds if
$\phi_2^0=\phi_1^0+\phi_3^0\,n^{-1/4}$, while the result stated in
Ciuperca~(2012) is supposed to hold for all $\phi_2^0\neq\phi_1^0$,
but with incorrect proof. So far, we are neither  able to correct
the proof for all $\phi_2^0\neq\phi_1^0$ nor to prove that the
statement itself is wrong. Similarly, Part~(ii) of Lemmas~4~and~8
hold under the condition that
$\phi_2^0=\phi_1^0+\phi_3^0\,n^{-1/4}$.
\end{rem}
\section{Concluding Remark}\label{sec:conclusion}
In this paper, we proposed a modification of Part (ii) of Lemma~3
given in Ciuperca~(2012) for which the proof is wrong. Further, we
provided the correct proof. It should be noted that there are
several important results in the quoted paper which were established
by using Lemma~3. In particular, the quoted author used this lemma
in establishing Lemmas~4 and 8, as well as Theorems 1, 2 and 4.

\appendix
\section{Appendix}
\begin{proof}[Proof of Lemma~\ref{lema3}]
\begin{enumerate}
  \item[(i)] The proof of Part (i) is similar to that in
  Ciuperca~(2012).
\item[(ii)]  Let $Z_n(\phi)=\ds{\sum_{i=1}^{n_1}}\eta_{i}(\phi,\phi_1^0)$,
$t_n(\phi)=\ds{\sum_{i=n_1+1}^{n_1+n_2}}[(\epsilon_i-X_i'(\phi-\phi_2^0))^2-(\epsilon_i-X_i'(\phi_1-\phi_2^0))^2]$.
Then,
\begin{eqnarray*}
|t_n(\hat{\phi}_{n_1+n_2})|&=&|-2\sum_{i=n_1+1}^{n_1+n_2}\varepsilon_iX_i'(\hat{\phi}_{n_1+n_2}-\phi_1^0)+(\hat{\phi}_{n_1+n_2}-\phi_2^0)'\sum_{i=n_1+1}^{n_1+n_2}X_iX_i'(\hat{\phi}_{n_1+n_2}-\phi_2^0)\\
&&-(\phi_1^0-\phi_2^0)'\sum_{i=n_1+1}^{n_1+n_2}X_iX_i'(\phi_1^0-\phi_2^0)|.
\end{eqnarray*}
Since $||\hat{\phi}_{n_1+n_2}-\phi_1^0||\leq n^{-(u-v-\delta)/2}$,
by Cauchy-Schwarz inequality, we have
$$|\sum_{i=n_1+1}^{n_1+n_2}\varepsilon_iX_i'(\hat{\phi}_{n_1+n_2}-\phi_1^0)|\leqslant
O(n^{v/2}n^{-(u-v-\delta)/2})=o(1).$$ Further, let $\lambda_{\max}$
be the largest eigenvalue of
$\frac{1}{n_2}\ds{\sum_{i=n_1+1}^{n_1+n_2}}X_iX_i'$. Then, using the
fact that $\phi_2^0=\phi_1^0+\phi_3^0n^{-1/4}$ and Cauchy-Schwarz
inequality, we have
\begin{eqnarray*}
&&(\hat{\phi}_{n_1+n_2}-\phi_2^0)'\sum_{i=n_1+1}^{n_1+n_2}X_i X_i'
(\hat{\phi}_{n_1+n_2}-\phi_2^0)\\
&=&(\hat{\phi}_{n_1+n_2}-\phi_1^0)'n_2\frac{1}{n_2}
\sum_{i=n_1+1}^{n_1+n_2}X_iX_i'(\hat{\phi}_{n_1+n_2}-\phi_1^0)-2\phi_3^{0'} n^{-1/4}n_2\frac{1}{n_2}\sum_{i=n_1+1}^{n_1+n_2}X_i X_i'(\hat{\phi}_{n_1+n_2}-\phi_1^0)\\
&&+n^{-1/2}\phi_3^{0'} n_2\frac{1}{n_2}\sum_{i=n_1+1}^{n_1+n_2}X_i X_i'\phi_3^0\\
&\leqslant&n_2\lambda_{\max}||\hat{\phi}_{n_1+n_2}-\phi_1^0||^2+2||\phi_3^{0}|| n^{-1/4}n_2\lambda_{\max}||\hat{\phi}_{n_1+n_2}-\phi_1^0||+n^{-1/2}n_2\lambda_{\max}||\phi_3^0||^2,
\end{eqnarray*}
and then,
\begin{eqnarray*}
(\hat{\phi}_{n_1+n_2}-\phi_2^0)'\sum_{i=n_1+1}^{n_1+n_2}X_i
X_i'(\hat{\phi}_{n_1+n_2}-\phi_2^0)
=O(n^{(u-v-\delta)}n^v)+o(1)+O(n^{v-1/2})=o(1).
\end{eqnarray*}
Also, we have
\begin{eqnarray*}
&&({\phi}_{1}^0-\phi_2^0)'\sum_{i=n_1+1}^{n_1+n_2}X_iX'_i(\phi_1^0-\phi_2^0)
=n^{-1/2}\phi_3^{0'}\sum_{i=n_1+1}^{n_1+n_2}X_iX_i'\phi_3^0=O(n^{v-1/2})=o(1).
\end{eqnarray*}
Therefore, $|t_n(\hat{\phi}_{n_1+n_2})|=o_p(1)$. Further, since\\
$Z_{n}(\phi_1^0)=t_n(\phi_1^0)=0$,
$Z_{n}(\hat{\phi}_{n_1+n_2})+t_{n}(\hat{\phi}_{n_1+n_2})\leqslant
Z_{n}(\phi_1^0)+t_n(\phi_1^0)$, we have
$$0\geqslant z_{n}(\hat{\phi}_{n_1+n_2})+t_{n}(\hat{\phi}_{n_1+n_2})
\geqslant \inf_\phi Z_{n}(\phi)-|t_{n}(\hat{\phi}_{n_1+n_2})|
=\inf_\phi Z_{n}(\phi)-|o_p(1)|.$$
Hence
$$|z_{n}(\hat{\phi}_{n_1+n_2})|-|t_{n}(\hat{\phi}_{n_1+n_2})|
\leqslant |z_{n}(\hat{\phi}_{n_1+n_2})+t_{n}(\hat{\phi}_{n_1+n_2})|
\leqslant |\inf_\phi Z_{n}(\phi)|+o_p(1),$$
which implies that
$$|z_{n}(\hat{\phi}_{n_1+n_2})|\leqslant |\inf_\phi Z_{n}(\phi)|+o_p(1)
+|t_{n}(\hat{\phi}_{n_1+n_2})|=|\inf_\phi Z_{n}(\phi)|+o_p(1).$$
Let $\hat{\phi}_{n_1}=\arg\min_{\phi}Z_n(\phi)$ and $\lambda_{\max}$
be the largest eigenvalue of $n_1^{-1}\sum_{i=1}^{n_1}X_iX'_i$.
Then, by Cauchy-Schwarz inequality,
\begin{eqnarray*}
\inf_\phi
Z_{n}(\phi)\leqslant\left(\sqrt{n_1}\left\|\hat{\phi}_{n_1}-\phi_1^0\right\|\right)^2
\lambda_{\max}
+2\sqrt{n_1}\left|(\hat{\phi}_{n_1}-\phi_1^0)'n_1^{-1/2}\sum_{i=1}^{n_1}\varepsilon_iX_i\right|,
\end{eqnarray*}
and then,
\begin{eqnarray*}
\inf_\phi Z_{n}(\phi) =O_p(1)+O_p(1)O_p(1)=O_p(1), \quad{ } \mbox{
and }  \quad{ } |z_{n}(\hat{\phi}_{n_1+n_2})|=O_p(1).
\end{eqnarray*}

 Now, 
let
\begin{eqnarray*}
Z_n^s(\phi)&=&\sum_{i=1}^{n_1}\eta_{i}(\phi,\phi_1^0)+\lambda_{n;(0,n_1)}
[\sum_{k=1}^p(|\phi_{,k}|^\gamma-|\phi_{1,k}^0|^\gamma)]\\
t_n^s(\phi)&=&\sum_{i=n_1+1}^{n_1+n_2}[(\epsilon_i-X_i'(\phi-\phi_2^0))^2
-(\epsilon_i-X_i'(\phi_1-\phi_2^0))^2]\\
& & \quad{ } +\lambda_{n;(n_1,n_1+n_2)}
[\sum_{k=1}^p(|\phi_{,k}|^\gamma-|\phi_{1,k}^0|^\gamma)].
\end{eqnarray*}
 Then,
$$A_{n_1+n_2}^s(\phi)=Z_n^s(\phi)+t_n^s(\phi)-(\epsilon_i-X_i'(\phi_1-\phi_2^0))^2
+\lambda_{n;(n_1,n_1+n_2)}[\sum_{k=1}^p(|\phi_{1,k}^0|^\gamma-|\phi_{2,k}^0|^\gamma)].$$
Then
$\hat{\phi}_{n_1+n_2}^s=\arg\min_{\phi}(Z_n^s(\phi)+t_n^s(\phi))
=\arg\min_{\phi}A_{n_1+n_2}^s(\phi)$.
In addition, using the similar approach as previous, with the fact
that $||\hat{\phi}_{n_1+n_2}^s-\phi_1^0||\leqslant
n^{-(u-v-\delta)/2}$ and $\phi_2^0=\phi_1^0+\phi_3^0n^{-1/4}$, we
have
\begin{eqnarray*}
|t_{n}^s(\hat{\phi}_{n_1+n_2}^s)|&\leqslant& o(1)+\lambda_{n;(n_1,n_1+n_2)}
[\sum_{k=1}^p(|\hat{\phi}_{n_1+n_2,k}^s|^\gamma-|\phi_{1,k}^0|^\gamma)]\\
&=&o_p(1)+O(n^{v/2})O_p(||\hat{\phi}_{n_1+n_2}^s-\phi_1^0||)=O_p(n^{-(u-2v-\delta)/2})
=o_p(1).\end{eqnarray*}
Besides, $Z_n^s(\phi_1^0)=t_n^s(\phi_1^0)=0$, thus
\begin{eqnarray*}
0\geqslant\inf_\phi(Z_n^s(\phi_1^0)+t_n^s(\phi_1^0))
=Z_{n}^s(\hat{\phi}_{n_1+n_2}^s)+t_{n}^s(\hat{\phi}_{n_1+n_2}^s)
=Z_{n}^s(\hat{\phi}_{n_1+n_2}^s)-|o_p(1)|\\
\geqslant
\inf_{\phi}Z_n^s(\phi)-|o_p(1)|.
\end{eqnarray*}
Hence
$$|Z_n^s(\hat{\phi}_{n_1+n_2}^s)|\leqslant|\inf_\phi Z_n^s(\phi)|+o_p(1).$$
On the other hand, since
$$0\geqslant\inf_\phi Z_n^s(\phi)\geqslant \inf_\phi Z_n(\phi)+\lambda_{n;(0,n_1)}
\inf_\phi[\sum_{k=1}^p(|\phi_{,k}|^\gamma-|\phi_{1,k}^0|^\gamma)],$$
$$|\inf_\phi Z_n^s(\phi)|\leqslant |\inf_\phi Z_n(\phi)|+|\lambda_{n;(0,n_1)}
\inf_\phi[\sum_{k=1}^p(|\phi_{,k}|^\gamma-|\phi_{1,k}^0|^\gamma)]|.$$
Further,
$$\inf_\phi[\sum_{k=1}^p(|\phi_{,k}|^\gamma-|\phi_{1,k}^0|^\gamma)]
\leqslant \sum_{k=1}^p(|\hat{\phi}_{n_1,k}|^\gamma-|\phi_{1,k}^0|^\gamma)
=O_p(||\hat{\phi}_{n_1}-\phi_1^0||)=O_p(n_1^{-1/2}),$$
and $\inf_\phi Z_n(\phi)=O_p(1)$. It follows that $|\inf_\phi
Z_n^s(\phi)|\leqslant O_p(1)$. Hence,
$$|Z_n^s(\hat{\phi}_{n_1+n_2}^s)|\leqslant|\inf_\phi Z_n^s(\phi)|+o_p(1)=O_p(1).$$
\end{enumerate}
\end{proof}

\end{document}